\newtheorem{thm}{Theorem}
\newtheorem{lemma}[thm]{Lemma}
\newtheorem{definition}[thm]{Definition}
\newtheorem{corollary}[thm]{Corollary}
\newtheorem{problem}[thm]{Problem}
\newtheorem{conj}[thm]{Conjecture}
\title{Inapproximability of Additive Weak Contraction under \textsc{SSEH} and \textsc{Strong UGC}}
\author{Siddhartha Jain\thanks{IIIT-Delhi, India.  Email: \url{siddhartha16269@iiitd.ac.in}}}
\begin{document}

\maketitle
\begin{abstract}
Succinct representations of a graph have been objects of central study in computer science for decades. In this paper, we study the operation called \emph{Distance Preserving Graph Contractions}, which was introduced by Bernstein et al. 
(ITCS, 2018). This operation gives a minor as a succinct representation of a graph that preserves all the distances of the original (up to some factor). The graph minor given from contractions can be seen as a dual of spanners as the distances can only shrink (while distances are stretched in the case of spanners). Bernstein et al. proved inapproximability results for the problems of finding maximum subset of edges that yields distance preserving graph contractions for almost major classes of graphs except for that of Additive Weak Contraction. 

The main result in this paper is filling the gap in the paper of Bernstein et al. We show that the Maximum Additive Weak Contraction problem on a graph with $n$ vertices is inapproximable up to a factor of $n^{1-\epsilon}$ for every constant $\epsilon>0$. Our hardness results follow from that of the Maximum Edge Biclique (\textsc{MEB}) problem whose inapproximability of $n^{1-\epsilon}$ has been recently shown by Manurangsi (ICALP, 2017) under the \textsc{Small Set Expansion Hypothesis (SSEH)} and by Bhangale et al. (APPROX, 2016) under the \textsc{Strong Unique Games Conjecture (SUGC)} (both results also assume $\mathrm{NP}\not\subseteq\mathrm{BPP}$).
\end{abstract}

\section{Introduction}
Coping with very large networks has been a major issue in computer science in both theory and practice. 
Operations acting on extremely large networks are rather time-consuming and not quite satisfactory for most applications.
The techniques of graph compression have been posited as a solution,  as they reduce the size of the network and thus reduce the processing time.
While compressing it is impossible to retain all the information about the original network. So we aim to retain specific properties of the input. This motivates the study of an object called {\em Spanner} that keeps only a sparse subset of the edges of a network, while preserving the distances between every pair of vertices up to some multiplicative (resp., additive) factor. To be formal, in graph theory terminology, a $k$-spanner $H$ of a graph $G$ is a sparse subgraph of $G$ such that every pair of vertices in $H$ has distance at most a multiplicative (resp., additive) factor $k$ away from the original distance in $G$. 

Bernstein et al. \cite{bernstein_et_al:LIPIcs:2018:8342} recently proposed a compression operation called {\em Distance Preserving Graph Contraction}. Here we obtain a subgraph by contracting subsets of edges, while promising a lower bound on the distances, which results in a minor of the input graph. Observe that the contraction operation can only decrease the distance of each pair of vertices, whereas deleting edges as in the case of spanners can only increase the distances. The contracted graph can be thought of as a dual of spanners.

In \cite{bernstein_et_al:LIPIcs:2018:8342}, the authors introduced the problem of finding the maximum number of edges whose contraction produces a minor that guarantees the distance between any two vertices to be shorter by a factor of at most $k$, namely, the {\em $k$-Contraction} problem. The authors also defined the relaxed variant, namely the {\em Weak $k$-Contraction} problem.
Note that, the Contraction problem is studied only in the case of additive contraction; this is because, in the multiplicative case, no edges can be contracted. In each variant, Bernstein et al. studied several important graph classes and either present polynomial-time algorithm or show NP-hardness results for the problems; see \Cref{tab:known-results}.
We list their results with ours.

\begin{table}
    \begin{tabular}{ |p{3cm}|p{3cm}|p{5cm}|  }
    \hline
    \multicolumn{3}{|c|}{Inapproximability results} \\
    \hline
    Tolerance & Type & Hardness \\
    \hline
    Additive ($\alpha=1$) & Contraction & $m^{\frac{1}{2}-\epsilon}$-inapx on bipartite graphs, $l=1$ \cite{bernstein_et_al:LIPIcs:2018:8342}\\
    Additive ($\alpha=1$) & Contraction & $n^{1-\epsilon}$-inapx \cite{bernstein_et_al:LIPIcs:2018:8342}\\
    Additive ($\alpha=1$) & Weak Contraction & $n^{1-\epsilon}$-inapx on bipartite graphs, $l=1$ [\bf{Thm \ref{thm:hard}}]\\
    Multiplicative ($\beta=0$) & Weak Contraction & $n^{1-\epsilon}$-inapx \cite{bernstein_et_al:LIPIcs:2018:8342} \\
    \hline
    \end{tabular}
    \caption{Known Results on the (Weak) Contraction problems.}
    \label{tab:known-results}
\end{table}

\section{Preliminaries}

\subsection{Notations}

We use $G(V,E,d)$ or simply $G$ to refer to a simple, undirected graph with a non-negative distance function $d : E \rightarrow \mathbb{R}^{+}$. Throughout, we assume that $G$ has $n$ veritces and $m$ edges. We assume further that the graph $G$ is connected since, otherwise, we can solve the subproblem on each connected component separately. 
Given a set of contracted edges $C \subseteq E$ (we will mostly use $C$ to denote contracted edges), the distance function induced by $d$ after the contraction is denoted by $d_C$.

\subsection{Distance-Preserving Contractions}

\begin{definition}
Given a graph $G(V, E)$ and a distance function $d: E \rightarrow \mathbb{R}^+$, an {\em $(\alpha,\beta)$-contraction} of $G$ is a set of edges $C\subseteq E$ such that $d_C(v, u) \geq d(v, u)/\alpha - \beta$ for all vertices $v, u \in V$. We abbreviate $(\alpha,\beta)$-contraction by  \textsc{Cont}$(\alpha, \beta)$.
\end{definition}

The concept of Weak Contraction is defined to allow enough flexibility to make Multiplicative (Weak) Contraction non-trivial.

\begin{definition}
Given a graph $G(V, E)$ and a distance function $d: E \rightarrow \mathbb{R}^+$, an {\em $(\alpha,\beta)$-weak-contraction} of $G$ is a set of edges $C\subseteq E$ such that the following two properties hold:
\begin{itemize}
    \item $C \subsetneq E$
    \item $d_C(v, u) \geq d(v, u)/\alpha - \beta$ whenever $d_C(v, u) \neq 0$ for all $v, u \in V$
\end{itemize}
We abbreviate $(\alpha,\beta)$-weak-contraction by \textsc{WeakCont}$(\alpha, \beta)$.
\end{definition}


We can now formulate the corresponding problems for these structures.

\begin{problem}
Given a graph $G(V, E)$ and a distance function $d: E \rightarrow \mathbb{R}^+$, find an optimal \textsc{Cont}$(\alpha,\beta)$, $C^{*}$ of $G$. $C^{*}$ is said to be optimal if there does not exist a \textsc{Cont}$(\alpha, \beta)$, $C$ of $G$ such that $|C| > |C^{*}|$.
\end{problem}

\begin{problem}
Given a graph $G(V, E)$ and a distance function $d: E \rightarrow \mathbb{R}^+$, find an optimal \textsc{WeakCont}$(\alpha,\beta)$, $C^{*}$ of $G$. $C^{*}$ is said to be optimal if there does not exist a \textsc{WeakCont}$(\alpha, \beta)$, $C$ of $G$ such that $|C| > |C^{*}|$.
\end{problem}

\subsection{Complexity Assumptions}

We now discuss the Complexity Theoretic assumptions that are required for the result by each of the previous works. For completeness, we reproduce the definitions here.

\paragraph*{The Strong Unique Games Conjecture (SUGC).} This conjecture was first introduced by Bansal and Khot \cite{Bansal:2009:OLC:1747597.1748061}, but coined by Bhangale et al. \cite{bhangale_et_al:LIPIcs:2016:6272} with a slight modification.

\begin{definition}
Given a bi-regular bipartite graph $G(U, V, E, L, \{\pi_e\}_{e\in E})$ and a labelling $l:U\cup V \rightarrow [L]$, we say an edge $e=(u, v)$ is satisfied if $\pi_e(l(v)) = l(u)$. Moreover, let $s_l$ be the fraction of edges satisfied by the labelling $l$.
\end{definition}

\begin{conj} \cite{bhangale_et_al:LIPIcs:2016:6272}
For all $\delta, \eta, \gamma > 0$ there exists a $L \in \mathbb{N}$ such that given $G(U, V, E, L, \{\pi_e\}_{e\in E})$, it is \textsc{NP-Hard} to distinguish between the following two cases:

\begin{itemize}
    \item There exists an $l$ such that $s_l \geq 1 - \eta$
    \item For all $l$, $s_l\leq \gamma$. Moreover, for all $S\subseteq V$ such that $|S| = \delta|V|$, we have that $|\Gamma(S)| \geq (1-\delta)|U|$ where $\Gamma(S) = \{u \in U | \exists v ~s.t.~ (u, v) \in E\}$
\end{itemize}
\end{conj}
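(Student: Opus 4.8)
The final statement is the \emph{Strong} Unique Games Conjecture: the ordinary Unique Games Conjecture augmented with a strong $V$-side vertex-expansion guarantee in the soundness regime. Since the bare assertion that distinguishing $s_l \geq 1-\eta$ from $s_l \leq \gamma$ is \textsc{NP-Hard} is already the ordinary UGC, the plan is not to establish this from scratch but to \emph{derive} it by a gap- and expansion-amplifying reduction from a standard hypothesis. Concretely, I would start from a unique-games instance $G_0(U_0,V_0,E_0,L_0,\{\pi_e\})$ furnished either by the ordinary UGC, or --- matching the hypotheses used throughout this paper --- by the small-set-expanding unique-games instances produced by the Raghavendra--Steurer reduction out of the Small Set Expansion Hypothesis, with completeness $1-\eta$ and soundness $\gamma_0 \ll \gamma$ over a constant-size alphabet, and then transform its constraint graph so that it becomes bi-regular and a strong bipartite disperser while keeping the alphabet size constant.

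The expansion-forcing step is the heart of the construction, and I would obtain it by a (derandomized) parallel-repetition / tensoring operation carried out over an expanding base. The new right vertices are $t$-tuples drawn from $V_0$ and the new left vertices are $t$-tuples from $U_0$; two tuples are joined whenever they are simultaneously adjacent along an expander walk $W$ on the combined vertex set, and the corresponding edge carries the product permutation $\pi_{e_1}\times\cdots\times\pi_{e_t}$. Crucially, a product of permutations is a permutation, so uniqueness is preserved and the alphabet $[L_0]^t$ is still of constant size for constant $t$. Choosing $W$ to be a sufficiently good expander and $t$ a suitable constant makes the incidence graph of the repeated instance a strong $(\delta,1-\delta)$-disperser, so that every $S\subseteq V$ with $|S|=\delta|V|$ satisfies $|\Gamma(S)|\geq (1-\delta)|U|$. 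Because this expansion is a structural feature of the fixed graph and does not depend on the labelling, the promise in the second bullet holds \emph{by construction} --- and indeed for \textsc{YES} instances as well, which only strengthens the conclusion.

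Completeness is then routine: the optimal labelling of $G_0$ lifts coordinatewise to the repeated instance and satisfies at least a $1-\eta$ fraction of the product constraints. The soundness analysis is where the real work lies: I would invoke the parallel-repetition/product soundness behaviour for unique games to argue that the value drops below $\gamma$, equivalently that any labelling of the repeated instance beating $\gamma$ can be decoded --- by coordinate projection together with a correlated-sampling argument along the expander walk --- to a labelling of $G_0$ beating $\gamma_0$, contradicting the soundness of the base instance. The expander structure of $W$ is exactly what makes this decoding lose only a controlled additive amount.

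\textbf{The main obstacle} I anticipate is simultaneously meeting all three requirements with a single constant alphabet. Driving the soundness below $\gamma$ by repetition, achieving the \emph{exact} $(\delta,1-\delta)$ vertex-expansion threshold, and keeping the graph bi-regular pull the parameters in different directions: stronger expansion and lower soundness both want larger $t$ and denser walks, yet the alphabet size $L=L_0^{t}$ and the decoding loss both degrade as $t$ grows, so a careful balance of $\delta,\eta,\gamma$ against $t$ and the spectral gap of $W$ is needed. Compounding this, the soundness of parallel repetition for unique/permutation games --- and especially of its derandomized, expander-walk variants --- is notoriously delicate to control, and quantifying this loss tightly enough that one constant $L$ serves all three parameters is the crux of the argument.
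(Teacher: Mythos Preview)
The statement you are attempting to prove is a \emph{conjecture}, not a theorem: the paper introduces the Strong Unique Games Conjecture as a complexity-theoretic \emph{assumption} (alongside SSEH) and offers no proof of it whatsoever. There is therefore no ``paper's own proof'' to compare your proposal against; the paper simply takes SUGC as a hypothesis under which the downstream inapproximability of \textsc{MEB} (and hence of \textsc{WeakCont}$(1,\beta)$) is known.

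Beyond this categorical mismatch, your proposed derivation is itself attempting something that is, to current knowledge, open. Deriving SUGC from ordinary UGC or from SSEH would be a significant result in its own right: the expansion clause in the soundness case is precisely what distinguishes SUGC from UGC, and it is not known to follow from either UGC or SSEH via any reduction. Your sketch conflates two different things --- forcing the \emph{constraint graph} to be a disperser (which one can indeed engineer by expander-based repetition) versus guaranteeing that the expansion property holds \emph{only on NO instances} as part of the promise. If your construction makes every instance a $(\delta,1-\delta)$-disperser regardless of YES/NO status, you have changed the problem, not proved the conjecture: you would need to show that the resulting promise problem is still \textsc{NP-hard}, and the soundness analysis for expander-derandomized parallel repetition of unique games at the level of precision you need is not available in the literature. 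In short, you are proposing to resolve an open question, and the obstacle you flag at the end is not merely a parameter-balancing nuisance but the substantive reason the conjecture remains a conjecture.
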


\paragraph*{The Small Set Expansion Hypothesis (SSEH).} This conjecture was introduced by Raghavendra and Steurer \cite{Raghavendra:2010:GEU:1806689.1806792} to overcome the shortcomings of UGC while trying to prove hardness results, and it has received immense attention since. It is in fact equivalent to a stronger version of \textsc{UGC} but distinct from \textsc{Strong UGC}.

\begin{definition}
Given a $d$-regular graph $G(V, E)$, for every $S\subseteq V$ let us define 
$$\phi_G(S) = \frac{|E(S, V\setminus S)|}{d|S|}$$
Moreover, for every $\delta \in [0, 1/2]$ we define 
$$\phi_G(\delta) = \min_{|S| = \delta |V|} \phi_G(S)$$
\end{definition}

\begin{conj}
\cite{Raghavendra:2010:GEU:1806689.1806792} For every $\eta > 0$ there exists a $\delta$ such that given a graph $G(V, E)$ it is \textsc{NP-Hard} to distinguish between the following two cases:
\begin{itemize}
    \item $\phi_G(\delta) \geq 1 - \eta$
    \item $\phi_G(\delta) \leq \eta$
\end{itemize}
\end{conj}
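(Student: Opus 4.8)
Before proposing any plan I must flag the point that governs everything else: this statement is the Small Set Expansion Hypothesis of Raghavendra and Steurer, and it is a \emph{conjecture}, not a theorem. No unconditional proof is known, and supplying one would resolve a central open problem in the hardness of approximation; indeed, as the excerpt itself notes, the hypothesis is equivalent to a strong form of \textsc{UGC}, so proving it is at least as hard as proving that variant of Unique Games. What follows is therefore a description of the only viable lines of attack and of the obstruction that has kept the statement open, not a proof.

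The plan any attempt must follow is a gap-producing reduction. One would begin from a problem already known to be \textsc{NP-Hard} in gap form --- the natural candidates being a Unique Games instance or a two-prover one-round PCP --- and construct a regular graph $G$ whose small-set expansion profile $\phi_G(\delta)$ encodes that gap. In the completeness case the intended labelling of the source instance should induce a structure in which every set of measure $\delta$ sends almost all its edges outward, giving $\phi_G(\delta)\ge 1-\eta$; in the soundness case, unsatisfiability of the source should be witnessed by a single $\delta$-fraction set that is nearly non-expanding, $\phi_G(\delta)\le\eta$. The design freedom lies in the edge gadget and in choosing $\delta$ as a function of $\eta$ and of the source instance's parameters, paralleling the known reduction from \textsc{SSEH} to \textsc{UGC} in \cite{Raghavendra:2010:GEU:1806689.1806792}.

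The main obstacle --- and the reason the statement remains a conjecture --- is the soundness direction. Ruling out \emph{all} non-expanding small sets is far stronger than ruling out a single sparse balanced cut, and standard PCP machinery controls the latter but not the former: it does not simultaneously bound the expansion of every $\delta$-fraction set. This difficulty is made concrete by the subexponential-time algorithm of Arora, Barak and Steurer for the gap small-set expansion problem, which means any purported \textsc{NP-Hard}ness reduction must produce instances on which that algorithm fails --- and no construction achieving this is known. Equivalently, the gap one would need cannot simply be inherited from an off-the-shelf PCP.

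For the purposes of this paper the correct treatment is therefore to take the statement as a hypothesis and use it as an assumption rather than to prove it. The downstream inapproximability result is then established \emph{conditionally}: one invokes the \textsc{SSEH}-hardness of Maximum Edge Biclique and reduces that problem to Maximum Additive Weak Contraction, so no independent proof of the hypothesis itself is required.
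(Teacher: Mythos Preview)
Your assessment is correct and matches the paper's treatment: the statement is the Small Set Expansion Hypothesis, which the paper states as a conjecture and uses purely as an assumption without any attempt at proof. There is nothing to compare beyond this, since the paper offers no argument for the conjecture itself.
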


We reduce the Maximum Edge Biclique (\textsc{MEB}) problem to finding the largest \textsc{WeakCont}$(1, \beta)$ as defined in the paper by Bernstein et al. \cite{bernstein_et_al:LIPIcs:2018:8342}, called Weak Contraction with tolerance function $\phi(x) = x - \beta$.

\textsc{MEB} was proven to be hard to approximate under \textsc{SSEH} by Manurangsi \cite{manurangsi:LIPIcs:2017:7500}. It was also proven to be hard to approximate under the \textsc{Strong UGC} by Bhangale et al \cite{bhangale_et_al:LIPIcs:2016:6272}.

\subsection{Biclique} 

\begin{problem}
Maximum Edge Biclique (\textsc{MEB}): given a bipartite graph G, find a complete bipartite subgraph of G with maximum number of edges.
\end{problem}

\begin{problem}
Maximum Balanced Biclique (MBB): given a bipartite graph G, find a balanced complete bipartite subgraph of G with maximum number of vertices.
\end{problem}

\begin{thm}
\label{thm:pasin}\cite{manurangsi:LIPIcs:2017:7500} Assuming SSEH, there is no polynomial time algorithm that approximates
\textsc{MEB} or \textsc{MBB} to within $n^{1-\epsilon}$ factor of the optimum for every $\epsilon > 0$, unless $\textsc{NP} \subseteq \textsc{BPP}$.
\end{thm}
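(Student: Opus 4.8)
This is a theorem of Manurangsi, so I only outline the route I would take. The plan is to reduce from the Small Set Expansion gap instance guaranteed by \textsc{SSEH}: a $d$-regular graph $G(V,E)$ on $N$ vertices which either has $\phi_G(\delta)\ge 1-\eta$ (every $\delta$-fraction set expands --- the ``expander'' case) or admits a set $S$ with $|S|=\delta N$ and $\phi_G(S)\le\eta$ (a ``cluster'' --- the non-expanding case). From $G$ I would build a bipartite graph $H$ whose bicliques encode clusters of $G$, arranged so that in the non-expanding case $H$ has a large biclique while in the expander case all bicliques are tiny. Feeding the expanding bi-regular Unique Games instance of the \textsc{Strong UGC} into the same skeleton in place of $G$ gives the Bhangale et al.\ variant; the two hypotheses differ only in which hard-instance generator drives the gadget.

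First I would fix the gadget so that the two sides of $H$ are copies of $V$ (up to a small alphabet blow-up) under a symmetric adjacency rule built from short walks / closed neighborhoods in $G$; keeping $H$ symmetric makes its maximum bicliques essentially square, which is what lets one transfer hardness between \textsc{MEB} and \textsc{MBB} (a square biclique with $k$ vertices per side has $k^{2}$ edges, and on a balanced instance a biclique with $e$ edges contains a balanced one with $\Theta(\sqrt e)$ vertices per side). For completeness, from the cluster $S$ --- in which a random step out of a random vertex of $S$ stays in $S$ with probability $\ge 1-\eta$ --- I would discard the $O(\eta)$-fraction of ``bad'' vertices and read off a biclique on $\Omega(\delta N)$ vertices per side. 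For soundness, if $A\times B$ is a biclique of $H$ in the expander case, translating the adjacency rule back to $G$ and applying the expander mixing lemma (equivalently, the spectral gap implied by $\phi_G(\delta)\ge 1-\eta$) should force $|A|,|B|\le N^{\epsilon}$, with the exponent a function of $\delta,\eta$ that tends to $0$ as those parameters do. Taking $\delta,\eta\to 0$ then yields, for every $\epsilon>0$, instances whose biclique optimum is either $\ge N^{1-\epsilon}$ or $\le N^{\epsilon}$, giving the claimed $n^{1-\epsilon}$ inapproximability for both \textsc{MEB} and \textsc{MBB}; if the gadget only produces a constant-factor gap directly, I would amplify it by a product construction.

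The step I expect to be the main obstacle is exactly this soundness-plus-amplification part. Up to bipartite complementation a biclique is an independent set, and independence number is not multiplicative under graph products (the Shannon-capacity phenomenon), so naive tensoring does not buy the gap for free: one must either read the full $n^{1-\epsilon}$ gap off a single, carefully parametrized \textsc{SSEH} instance, or design the product so that the pseudorandomness inherited from the expander kills the spurious bicliques. Pushing the soundness bound all the way down to $N^{\epsilon}$ --- rather than, say, $\sqrt N$ --- is where the spectral / expander-mixing estimate has to work hardest, and it is also where a random sampling (or random-restriction) step enters the construction, which is why the conclusion reads ``unless $\textsc{NP}\subseteq\textsc{BPP}$'' rather than ``unless $\textsc{P}=\textsc{NP}$''.
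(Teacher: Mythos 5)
This statement is a quoted theorem of Manurangsi; the paper does not prove it. What the paper supplies is (i) the constant-gap version (\Cref{lem:pasin}: completeness $K_{(\frac12-\delta)n,(\frac12-\delta)n}$ versus soundness no $K_{\delta n,\delta n}$), (ii) a short tensor-product argument transferring that gap from \textsc{MBB} to \textsc{MEB}, and (iii) a pointer to Appendix~B of the full version of \cite{manurangsi:LIPIcs:2017:7500} for the amplification to $n^{1-\epsilon}$ via randomized graph products, which is exactly where the $\textsc{NP}\subseteq\textsc{BPP}$ caveat enters. Your sketch follows the same skeleton --- SSE gap instance $\to$ bipartite gadget $\to$ constant gap $\to$ randomized-product amplification --- and you correctly locate both the source of the \textsc{BPP} assumption and the Shannon-capacity obstruction that rules out naive tensoring (your own Lemma~5-style tensor argument transfers a gap between the two biclique objectives but cannot by itself amplify the ratio, precisely because a large biclique in $G\times G$ pulls back to a biclique in $G$ whose side lengths shrink by a square root). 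So at the level of strategy your route is the intended one.

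The one substantive correction: your first-choice plan of reading the full $N^{1-\epsilon}$ versus $N^{\epsilon}$ gap off a \emph{single} carefully parametrized SSEH instance via the expander mixing lemma does not work. SSEH, as stated in the paper, fixes $\delta$ as a function of $\eta$ and only promises indistinguishability for that one scale; the resulting base reduction certifies soundness only against bicliques of side $\delta n$, i.e.\ a gap of roughly $\bigl(\tfrac12-\delta\bigr)/\delta$, a constant. There is no spectral estimate available in the soundness case strong enough to push biclique sides down to $N^{\epsilon}$ directly, because the hypothesis only controls expansion of sets at the single scale $\delta|V|$, not pseudorandomness at all scales. All of the polynomial-factor hardness in \Cref{thm:pasin} therefore comes from the randomized graph product (in the Berman--Schnitger style), and your fallback clause (``if the gadget only produces a constant-factor gap directly, I would amplify it'') is not a contingency but the actual and unavoidable path. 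With that understood, your sketch is an accurate, if non-self-contained, reconstruction of the cited proof.
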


Concretely, they prove this by showing the following lemma.

\begin{lemma}
\label{lem:pasin}\cite{manurangsi:LIPIcs:2017:7500} Assume SSEH. Then given a bipartite graph $G = (L \cup R, E)$ with $|L| = |R| = n$, for every $\delta > 0$ it is \textsc{NP-hard} to distinguish between the following two cases:
\begin{itemize}
    \item (Completeness) G contains $K_{(\frac{1}{2} - \delta)n,(\frac{1}{2} - \delta)n}$ as a subgraph.
    \item (Soundness) G does not contain $K_{\delta n,\delta n}$ as a subgraph.
Here $K_{t,t}$ denotes the complete bipartite graph in which each side contains t vertices.
\end{itemize}
\end{lemma}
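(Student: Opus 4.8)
The plan is to reduce from the Small Set Expansion problem underlying \textsc{SSEH}, which I will use in its strengthened ``equi-partition'' form (derivable from \textsc{SSEH}): for all $\eta>0$ and all $q\in\mathbb{N}$, it is \textsc{NP}-hard to distinguish, given a $d$-regular graph $G=(V,E)$ on $N$ vertices (with $d$ bounded in terms of $\eta$, after a routine degree-reduction step), between the \emph{YES} case, in which $V$ partitions into $q$ equal parts $S_1,\dots,S_q$ with $\phi_G(S_i)\le\eta$ for every $i$, and the \emph{NO} case, in which every set of size $\delta_0 N$ has expansion at least $1-\eta$, for a suitable $\delta_0=\delta_0(q,\eta)$. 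I would take $\eta$ very small relative to $\delta$ and $d$.

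From such a $G$ I would build a bipartite graph $H=(L\cup R,E_H)$, where $L,R$ are two copies of $V$ and $n:=N$, joining $u\in L$ to $v\in R$ whenever $u$ and $v$ are ``well separated'' in $G$; the simplest guess is the bipartite complement of $G$, i.e.\ join $u\in L$ to $v\in R$ exactly when $u\neq v$ and $\{u,v\}\notin E$, so that a biclique $A\times B$ of $H$ is precisely a pair of disjoint vertex sets of $G$ with no $G$-edge between them. \emph{Completeness.} In the YES case put $A_0=S_1\cup\dots\cup S_{q/2}$ and $B_0=S_{q/2+1}\cup\dots\cup S_q$, each of size $N/2$. The number of $G$-edges with one endpoint in $A_0$ and the other in $B_0$ is at most $\sum_i|E(S_i,\overline{S_i})|\le\eta dN$, so at most $2\eta dN$ vertices are incident to such an edge; deleting them from $A_0$ and $B_0$ leaves sets $A,B$ with no $G$-edge across them and $|A|,|B|\ge(\tfrac12-2\eta d)N\ge(\tfrac12-\delta)n$, hence $H\supseteq K_{(1/2-\delta)n,(1/2-\delta)n}$.

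\emph{Soundness} — showing that a small-set expander $G$ gives a bipartite graph with no $K_{\delta n,\delta n}$ — is where the real work lies, and it is the step I expect to be the main obstacle. The bipartite complement alone does not suffice: a biclique $A\times B$ with $|A|=|B|=\delta n$ only certifies a pair of disjoint $\delta N$-sets with no edge between them, and a small-set expander may well contain such a pair (two spectrally expanding halves joined by a sparse matching is a small-set expander, yet its two halves are almost non-adjacent). Because the NO guarantee of \textsc{SSEH} constrains only sets of size about $\delta_0 N$, any argument that merely ``grows'' one side of the biclique until it becomes large runs straight into this obstruction. The remedy has to be a more structured reduction in which a biclique is forced to exhibit a non-expanding set whose size lands in the narrow window the NO guarantee controls — for instance by composing the bipartite-complement idea with a product/amplification step, or by reducing instead from a strengthened expansion-gap problem (still implied by \textsc{SSEH}) whose NO case already precludes large ``bipartite holes''. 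Getting that intermediate problem and the map $G\mapsto H$ right, so that completeness survives the vertex cleanup while soundness converts a combinatorial biclique into a genuine violation of small-set expansion, is the heart of the proof; the accompanying estimates are then routine.
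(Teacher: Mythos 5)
This lemma is not proved in the paper at all: it is quoted verbatim from Manurangsi \cite{manurangsi:LIPIcs:2017:7500} as the engine behind Theorem \ref{thm:pasin}, so there is no in-paper proof to compare against. Judged on its own terms, your attempt has a genuine gap, and you have correctly located it yourself: the soundness direction is never established. Your completeness argument (take the bipartite complement, use an equi-partition of non-expanding sets, delete the few vertices touching crossing edges) is the right shape, although the ``equi-partition'' form of small set expansion you start from is itself a nontrivial strengthening that needs a citation or proof. But the reduction as written simply does not have a soundness proof: as you observe, a biclique $A\times B$ with $|A|=|B|=\delta n$ in the bipartite complement only certifies two disjoint $\delta N$-sets with no edge between them, and a small-set expander can contain such a pair, so the basic \textsc{SSEH} NO-guarantee (expansion of single sets of one fixed measure) cannot rule it out.

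The missing idea is the one you gesture at in your last sentence but do not supply. Manurangsi's proof does not reduce from the single-set expansion guarantee; it relies on the Raghavendra--Steurer--Tulsiani strengthening, which shows \textsc{SSEH} is equivalent to a variant whose NO case controls \emph{pairs} of sets: for every two sets $S,T$ of measure $\delta$, the number of edges between $S$ and $T$ is close to the value expected in a random graph of the same density, i.e.\ the NO case already forbids large ``bipartite holes.'' With that guarantee in hand, a $K_{\delta n,\delta n}$ in the complement-style construction immediately exhibits a pair of $\delta N$-sets with far too few crossing edges, contradicting the NO case; without it, the soundness claim is false for the construction you describe (your own sparse-matching example witnesses this). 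So the heart of the proof --- establishing or correctly invoking the pairwise-expansion form of \textsc{SSEH} and wiring the reduction to it --- is exactly the part that is absent.
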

This lemma works for \textsc{MBB} but also implies a similar result for \textsc{MEB}. Since the paper does not explicitly state this reduction, we state it here for exposition.

\begin{lemma}
Assume SSEH. Then given a bipartite graph $G = (L \cup R, E)$ with $|L| = |R| = n$, for every $\delta > 0$ it is \textsc{NP-hard} to distinguish between the following two cases:
\begin{itemize}
    \item (Completeness) G contains a biclique having $((\frac{1}{2} - \delta)n)^2$ edges..
    \item (Soundness) G does not any biclique containing $(\delta n)^2$ edges.
\end{itemize}
\end{lemma}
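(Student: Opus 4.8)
The plan is to obtain this statement from \Cref{lem:pasin} essentially for free, by applying that lemma with a suitably smaller gap parameter and using the elementary fact that a complete bipartite subgraph with many edges cannot be too lopsided. Fix the target parameter $\delta>0$; we may assume $\delta<1/2$, since this is the only regime in which the statement is meant to be used (and in which the completeness condition is non-degenerate). Set $\delta':=\delta^2$, so that $0<\delta'<\delta<1/2$, and run the reduction of \Cref{lem:pasin} with parameter $\delta'$. This produces, in polynomial time, a bipartite graph $G=(L\cup R,E)$ with $|L|=|R|=n$ for which it is \textsc{NP-hard} to distinguish the case ``$G$ contains $K_{(1/2-\delta')n,(1/2-\delta')n}$'' from the case ``$G$ does not contain $K_{\delta' n,\delta' n}$''. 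I claim the very same instance $G$ witnesses the dichotomy in the statement.

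\emph{Completeness.} Suppose $G$ contains $K_{(1/2-\delta')n,(1/2-\delta')n}$. Since $\delta'=\delta^2\le\delta$ we have $(1/2-\delta')n\ge(1/2-\delta)n$, so this subgraph already contains $K_{(1/2-\delta)n,(1/2-\delta)n}$, which is a biclique with $\big((1/2-\delta)n\big)^2$ edges.

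\emph{Soundness.} Suppose, in the soundness case of \Cref{lem:pasin}, that $G$ nonetheless contained a biclique $K_{a,b}$ with $ab\ge(\delta n)^2$; relabel so that $a\le b$, and note that $a,b\le n$. Then $a\ge ab/n\ge(\delta n)^2/n=\delta^2 n=\delta' n$, and therefore $b\ge a\ge\delta' n$ as well. Selecting $\delta' n$ vertices on each side of $K_{a,b}$ (possible since $a\ge\delta' n$) exhibits $K_{\delta' n,\delta' n}$ as a subgraph of $G$, contradicting the soundness guarantee of \Cref{lem:pasin}. Hence in this case $G$ contains no biclique with $(\delta n)^2$ edges.

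Since the two cases of the statement are inherited from the two hard cases of \Cref{lem:pasin} under the same polynomial-time reduction, distinguishing them is \textsc{NP-hard}, which proves the lemma. The only point that needs any attention — and it is a mild one — is that recovering the side-sizes of a biclique from its edge count loses a square, which is precisely why one invokes \Cref{lem:pasin} with $\delta^2$ in place of $\delta$; the rounding implicit in $\delta' n$ is harmless and can be absorbed either by taking $n$ large or by allowing a negligible constant slack in the choice of $\delta'$.
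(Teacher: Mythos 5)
Your proof is correct, but it takes a genuinely different route from the paper's. The paper derives the edge version from \Cref{lem:pasin} via a tensor-product construction: it forms $G\times G$, notes that a biclique with $t$ edges in $G$ yields a $K_{t,t}$ in $G\times G$, and argues soundness by projecting a $K_{t,t}$ of $G\times G$ back down to a biclique of $G$. That construction blows the instance up to $N=n^2$ vertices, and as written the implication runs from the edge-biclique gap on $G$ to the balanced-biclique gap on $G\times G$ --- the direction one would use to deduce \textsc{MBB} hardness from \textsc{MEB} hardness --- so extracting the stated lemma from it takes some unwinding. You instead keep the instance unchanged, invoke \Cref{lem:pasin} at the smaller parameter $\delta'=\delta^2$, and use the elementary counting fact that a biclique $K_{a,b}$ with $ab\ge(\delta n)^2$ and $a,b\le n$ must have $\min(a,b)\ge\delta^2 n$ and hence contain $K_{\delta' n,\delta' n}$. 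This is the cleaner and more standard way to pass from the balanced to the edge version: the reduction is the identity map, no graph product is needed, and the only cost is the substitution of $\delta^2$ for $\delta$ in the soundness guarantee, which is harmless because \Cref{lem:pasin} holds for every positive parameter. Your remark about the integrality of $\delta' n$ is the right level of care and matches the slack already implicit in the paper's statements.
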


\begin{proof}
Given input graph $G = (L \cup R, E)$ for \textsc{MEB}, we construct $G \times G$ (tensor product). Let us divide the vertex set of $G\times G$ into 4 parts. $V(G\times G) = (L, L) \cup (L, R) \cup (R, L) \cup (R, R)$. The only edges in this graph are between $(L, R)$ and $(R, L)$ and hence this graph is also bipartite, of size $\Theta(n^2)$. Let $N = n^2$.

\begin{itemize}
    \item (Completeness) A $K_{n_1, n_2}$ in $G$ corresponds to a $K_{n_1 n_2, n_1 n_2}$ in $G\times G$. Therefore, any biclique in $G$ with $t = n_1 n_2$ edges corresponds to a $K_{t, t}$ in $G\times G$. Let $t = ((\frac{1}{2} - \delta)n)^2$.
    
    \item (Soundness) Given a $K_{t, t}$ in $G\times G$, we know that for all $(v_l, v_r) \in V(K_{t,t})$ and $(u_l, u_r) \in V(K_{t,t})$, by definition $(v_l, u_r), (v_r, u_l) \in E$. Hence we have a biclique with $t^2$ edges in $G$. If there is no $K_{t, t}$ in $G\times G$ for $t = \delta n$, there can be no biclique of size $(\delta n)^2 = \delta^2 N$ in G.
\end{itemize}
\end{proof}

Theorem \ref{thm:pasin} follows from gap amplification via randomized graph product which is discussed in Appendix B of the full version of \cite{manurangsi:LIPIcs:2017:7500}. This is also where we assume \textsc{NP} $\not\subseteq$ \textsc{BPP}. Without this assumption (assuming only \textsc{P}$\neq$\textsc{NP}), both \cite{manurangsi:LIPIcs:2017:7500} and \cite{bhangale_et_al:LIPIcs:2016:6272} show that the problem cannot be approximated within any constant factor, under \textsc{SSEH} and \textsc{Strong UGC} respectively.

\section{Reduction}
First let us see a simple property of \textsc{WeakCont}$(1, 1)$ on graphs with unit edge lengths.

\begin{lemma}
\label{lem:path}
For any path $P$ in $G$, if two disjoint edges $(u_1, v_1), (u_2, v_2) \in P \cap C$ then $P \subset C$.
\end{lemma}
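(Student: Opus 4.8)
The plan is to derive the statement from one short distance estimate together with the defining inequality of a weak contraction; the quantitative core is that the two \emph{ends} of $P$ must be identified by $C$, and upgrading that to ``$P\subseteq C$'' is where the structure of the instance enters. Write $P=x_0,x_1,\dots,x_k$ and let the two disjoint edges of $P\cap C$ be $(x_p,x_{p+1})$ and $(x_q,x_{q+1})$ with $p+1<q$. Walking along $P$ inside the contracted graph costs $0$ on every edge of $P\cap C$ (such an edge has been contracted to a single vertex) and at most $1$ on every remaining edge, so
\[
d_C(x_0,x_k)\;\le\;k-\bigl|\{\,i:0\le i<k,\ (x_i,x_{i+1})\in C\,\}\bigr|\;\le\;k-2 ,
\]
the last step because the two prescribed disjoint edges contribute two distinct terms to the count. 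Since all edge lengths equal $1$ and $P$ is a shortest path in $G$, we also have $d(x_0,x_k)=k$.

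Now apply the $\textsc{WeakCont}$$(1,1)$ property to the pair $(x_0,x_k)$: either $d_C(x_0,x_k)=0$, or $d_C(x_0,x_k)\ge d(x_0,x_k)-1=k-1$. The second alternative contradicts the bound $d_C(x_0,x_k)\le k-2$ just obtained, so $d_C(x_0,x_k)=0$; equivalently, $x_0$ and $x_k$ lie in one connected component of the subgraph $(V,C)$, i.e.\ there is a path consisting only of edges of $C$ joining them. The remaining step is to conclude that this $C$-path is $P$ itself, so that every edge of $P$ lies in $C$; this holds once one knows that in the instance produced by the reduction the path $P$ to which the lemma is applied is the unique path between its endpoints. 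Granting that, the lemma is finished; and the same estimate applied to any sub-path $x_a\dots x_b$ with $a\le p$ and $b\ge q+1$ (which still contains both prescribed edges) extends the conclusion outward, so the hypothesis really does yield all of $P$ and not merely the portion between the two given edges.

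The step I expect to be the real obstacle is precisely this last one. For a completely arbitrary host graph the statement is false: take $P$ to be a shortest path, attach one extra short route between its two ends, and let $C$ consist of everything except a single edge of $P$; then contracting $C$ collapses the whole component, so the tolerance inequality is satisfied vacuously, yet $P\not\subseteq C$. Hence the conclusion genuinely relies on the reduction's instances admitting no such alternative route between the ends of $P$, at which point the identities ``$d_C$ between the endpoints of $P$ equals $0$'' and ``$P\subseteq C$'' coincide. Everything else — the walking bound of the first paragraph and the two-case split coming from the $\textsc{WeakCont}$$(1,1)$ inequality — is routine.
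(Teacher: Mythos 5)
There is a genuine gap: your argument does not reach the conclusion $P\subseteq C$. What you actually establish is that $d_C(x_0,x_k)=0$, i.e.\ that the two endpoints of $P$ end up identified, and you then defer the upgrade to ``every edge of $P$ is in $C$'' to the hypothesis that $P$ is the unique path in the instance between its endpoints. That hypothesis is not part of the lemma, and---more importantly---it is false for the paths to which \Cref{lem:biclique} actually applies this lemma: those paths run through the bipartite graph induced on $V_a\cup U_a$, where two vertices typically have many connecting paths, and also through concatenations such as $P_1\cup P\cup P_2$. So the conditional you introduce is never discharged, and the proof stops short of the statement. (A smaller issue of the same kind: your step $d(x_0,x_k)=k$ silently assumes $P$ is a shortest path, which the lemma's hypothesis ``any path $P$'' does not grant.)

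The paper's own proof is local rather than endpoint-based, and this is exactly what lets it avoid any appeal to uniqueness of $P$: assume some edge $(s,t)\in P\setminus C$ and, according to whether $(s,t)$ lies before, between, or after the two contracted edges along $P$, examine the pair $(s,v_2)$, $(u_1,v_2)$, or $(t,u_1)$. In each case the connecting subpath of $P$ contains both contracted edges, so its contracted length drops by at least $2$, exceeding the additive budget of $1$ allowed by $\textsc{WeakCont}(1,1)$; hence every edge of $P$ must be in $C$. That said, your closing observation is a fair criticism of the lemma as stated: for an arbitrary host graph one can add a shortcut between the ends of $P$ and take $C=E\setminus\{(s,t)\}$, collapsing the whole component so that the tolerance condition holds vacuously while $P\not\subseteq C$. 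This shows the paper's proof is itself implicitly using that $d_C\neq 0$ for the pair it examines (and that the relevant subpaths realize the graph distance). But the property that rescues the argument in the application is the structure of $B_G$ (e.g.\ the pendant vertices $v_b,u_b$, which cannot be identified with anything unless their unique incident edges are contracted), not uniqueness of $P$ between its endpoints; so your diagnosis of the difficulty is right, while the repair you propose is not the one that works, and the proof as written does not establish the lemma.
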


\begin{proof}
By way of contradiction let $(s, t) \in P \setminus C$. One of $d(s, v_2), d(t, u_1), d(u_1, v_2)$ reduces by 2 after contraction, depending on the relative position of $(s, t)$.
\end{proof}

Our results are based on a simple reduction gadget, which we describe first.

Given an input bipartite graph $G=(V \cup U, E)$ we create an instance $B_G$ as follows. We create two copies each of $V$ and $U$, let us call them $V_a, V_b, U_a, U_b$. The subset of vertices $V_a, U_a$ induces $G$. We also add edges of the form $(v_a, v_b)$ and $(u_a, u_b)$ for all $v \in V, u \in U$ to $E(B_G)$. An illustration in Fig \ref{fig:gadget}.

\begin{figure}[h]
    \centering
    \includegraphics{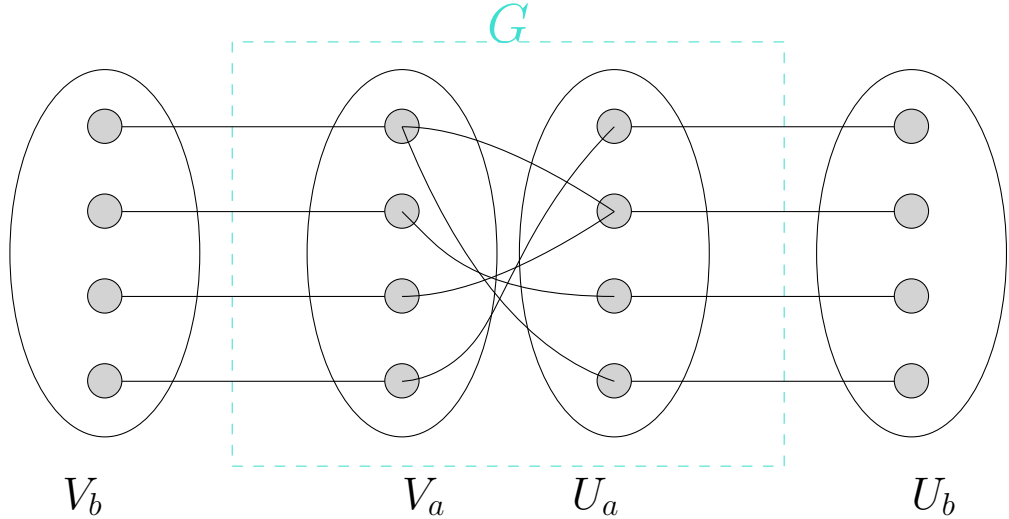}
    \caption{A representative drawing of $B_G$}
    \label{fig:gadget}
\end{figure}

The following lemma will help us prove that finding the largest \textsc{WeakCont}$(1,1)$ is hard.

\begin{lemma}
\label{lem:biclique}
Assigning weights 1 to all edges in $B_G$, any $(1,1)$ Weak Contraction of size strictly greater than 1 must contract a biclique in $B_G$.
\end{lemma}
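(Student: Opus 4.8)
The plan is to prove the structural statement directly: I take an arbitrary valid $(1,1)$-weak-contraction $C$ of $B_G$ with $|C|\ge 2$ and show that the subgraph $H=(V(H),C)$ spanned by the contracted edges is complete bipartite (and hence that $C$ contracts a biclique). I would obtain this in two stages — first that $H$ is connected, then that the single component is complete bipartite — and both stages are powered by the same mechanism as the proof of Lemma~\ref{lem:path}: a forbidden configuration forces a pair of vertices whose distance collapses by $2$, violating the $(1,1)$ tolerance.

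First I would establish connectivity. Suppose $C$ had edges $e_1,e_2$ in different components of $H$; then $e_1,e_2$ are vertex-disjoint. Since $B_G$ is connected I would take a shortest path between the vertex sets of $e_1$ and $e_2$; being shortest between the two edges, its interior avoids all four endpoints, so prepending $e_1$ and appending $e_2$ yields a single simple path $P$ containing both. Because $e_1,e_2$ lie in different components, the connecting portion of $P$ uses at least one uncontracted edge, i.e.\ $P\not\subseteq C$; but $P$ contains the two disjoint contracted edges $e_1,e_2$, so Lemma~\ref{lem:path} forces $P\subseteq C$, a contradiction. Hence $H$ is connected.

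The heart of the argument is the second stage, upgrading ``connected'' to ``complete bipartite.'' Here I would exploit the two pendant families $V_b,U_b$ that the gadget hangs off every core vertex: these degree-one vertices act as fixed distance probes that the contraction cannot move. Writing $W=V(H)$ with sides $X,Y$, I would assume toward a contradiction that some opposite-side pair $x\in X$, $y\in Y$ is not joined by an edge of $C$, and take a shortest $H$-path between them, which then has length at least $3$. Contracting $W$ to a single supernode, I would locate a witnessing pair — a probe pendant hanging off $x$ together with the far endpoint of that path, or two probe pendants hanging off two same-side vertices of $W$ — whose $B_G$-distance exceeds its post-contraction distance by at least $2$. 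Since every short route in $B_G$ between such probes is forced through $W$ (the pendants have degree one, and core distances are governed by $G$), no alternative route can absorb the gap, so the $(1,1)$ tolerance fails. This rules out every missing cross edge, so $H$ is complete bipartite and $C$ contracts a biclique.

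I expect this second stage to be the main obstacle, for two reasons. The distance bookkeeping is delicate: one must argue not merely that \emph{some} path through $W$ shrinks, but that the post-contraction shortest path between the chosen probes really is the short one — that no cheaper detour survives in $B_G$ — which is exactly where the pendant construction and the bipartite core have to be used with care. Second, one must cover every shape of a hypothetical non-biclique connected $H$ (missing cross edges, stars, paths of length $\ge 3$, and same-side vertices with unequal neighbourhoods) with a uniform choice of probe, rather than dispatching each small configuration by hand; organizing the case analysis so that a single ``distance drops by $2$'' witness handles every deviation from complete bipartiteness is the crux of the proof.
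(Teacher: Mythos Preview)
Your plan and the paper's proof share the same two ingredients (Lemma~\ref{lem:path} and the pendant families $V_b,U_b$ as distance witnesses), but the decomposition is different. The paper does not isolate connectivity; its first and longest step is a three-case analysis establishing that no ``matching'' edge $(v_a,v_b)$ or $(u_a,u_b)$ can lie in $C$ once $|C|>1$. Only after $C$ has been pushed entirely into the core $V_a\cup U_a$ does the paper run a single path-through-the-pendants argument (again via Lemma~\ref{lem:path}) to force the biclique structure on the endpoints of $C$.

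That preliminary elimination of matching edges is the piece your outline is missing, and Stage~2 as written leans on it implicitly. You assert that the pendants are ``fixed distance probes that the contraction cannot move,'' but that is precisely what fails when a matching edge is contracted: the pendant then sits inside the supernode $W$ and witnesses nothing from outside. Concretely, with a matching edge in $C$ the set $W$ may contain a degree-one vertex $v_b$, and the ``missing cross edge'' your Stage~2 posits may be some $(v_a',v_b)$ that is absent from $B_G$ itself, with no pendant hanging off $v_b$ to serve as your probe; you would have to argue separately that the probe attached to the \emph{other} endpoint still forces a drop of $2$, or else reproduce the paper's case analysis up front. There is also a mismatch between your stated target and what the probe actually buys: you aim to show every opposite-side pair $x\in X,\ y\in Y$ is joined by an edge \emph{of $C$}, but if $(x,y)\in E(B_G)\setminus C$ the pendant pair $x_b,y_b$ gives $d=3$ and $d_C=2$, which is not a violation of the $(1,1)$ bound. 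The pendant argument only forces $(x,y)\in E(B_G)$, i.e.\ that $W$ \emph{induces} a biclique in $B_G$ --- which is exactly what the paper proves and all that the application in Theorem~\ref{thm:hard} requires.
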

\begin{proof}




There are two kinds of edges we can contract. One are the ``matching" edges of the form $(v_a, v_b)$ or $(u_a, u_b)$. The other edges correspond to those in $G$.

First, we will see that if a ``matching" edge belongs to $C$, it must be that $|C| = 1$, or in other words that it is the only edge contracted. Without loss of generality let contracted edge be $(v_a, v_b)$.

\begin{itemize}
    \item First let us show one cannot contract any disjoint edge in the graph induced by $V_a \cup U_a$. By way of contradiction $(s, t)$ is contracted. Any other edge in the graph can be expressed as a part of a path including these edges. Hence by Lemma \ref{lem:path}, the entire graph must be contracted. But that means $C$ is not a \textsc{WeakCont}$(1,1)$.
    
    \item Now say some edge $(v_a, u_a)$ was contracted. Then $(u_a, u_b)$ must also be contracted, else $d(v_b, u_b)$ decreases by 2. Let us take some other edge $(s, t)$ induced by $V_a \cup U_a$. There is a path $P_1$ from $v_a$ to $t$. Similarly $P_2$ from $u_a$ to $s$. Now the path $\{(v_b, v_a)\} \cup P_1 \cup \{(s, t)\} \cup P_2 \cup \{(u_a, u_b)\}$ satisfies the condition for Lemma \ref{lem:path}. Hence, $(s, t) \in C$. But we have seen that once such an edge is contracted it leads to a contradiction.
    
    \item Now the only edges left to contract are the matching edges themselves. But if we contract any other matching edge, we can take a path including this edge and $(v_b, v_a)$. Now we are forced to contract an edge in $V_a \cup U_a$ which we have seen leads to a contradiction previously.
\end{itemize}

So now we can focus on the graph induced by $V_a \cup U_a$. Let $v_b, u_b$ be vertices such that their neighbours $v_a, u_a$ are endpoints of some edge in $C$. Let these edges be $(v_a, u_a')$ and $(v_a', u_a)$. Let us assume they are not involved in a biclique, which means $(v_a, u_a)$ does not exist. Since the graph is connected, there exists a path $P$ from $v_a'$ to $u_a'$. Let the path from $v_b$ to $u_a'$ be $P_1$ and the path from $u_b$ to $v_a'$ be $P_2$. Let us consider $P_1 \cup P \cup P_2$. Since we have contracted more than one edge in this path, we must contract the entire path. But since we know contracting matching edges leads to a contradiction, this is not possible. Therefore we must contract a biclique.
\end{proof}

\begin{thm}
\label{thm:hard}
Assuming SSEH \textsc{OR} assuming \textsc{Strong UGC}, no polynomial algorithm can approximate \textsc{WeakCont}$(1,1)$ within a factor of $n^{1-\epsilon}$ for every $\epsilon > 0$, even for bipartite graphs with unit edge lengths, unless \textsc{NP} $\subseteq$ \textsc{BPP}.
\end{thm}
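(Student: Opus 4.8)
The plan is to give an approximation-preserving reduction from \textsc{MEB} on bipartite graphs to the problem of computing a largest \textsc{WeakCont}$(1,1)$ on the gadget $B_G$ with unit edge lengths, and then to invoke Theorem~\ref{thm:pasin}. So let $G=(V\cup U,E)$ be an \textsc{MEB} instance, which we may take connected (otherwise we work component by component), with $n=|V|+|U|$; the gadget $B_G$ is bipartite (the matching edges join the two copies and respect a $2$-coloring lifted from $G$), carries unit edge lengths, and has $N=2n=\Theta(n)$ vertices, so it suffices to relate $\mathrm{OPT}(\textsc{MEB},G)$ and $\mathrm{OPT}(\textsc{WeakCont}(1,1),B_G)$ up to lower-order terms.

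\emph{Soundness.} Suppose $G$ has no biclique with $k$ edges, where $k\ge 2$. By Lemma~\ref{lem:biclique} any \textsc{WeakCont}$(1,1)$ of $B_G$ of size greater than $1$ is the edge set of a biclique of $B_G$; since every vertex of $V_b\cup U_b$ has degree one, a biclique of $B_G$ with more than one edge must lie inside the copy of $G$ carried by $V_a\cup U_a$, and hence projects to a biclique of $G$ with the same number of edges. Therefore every \textsc{WeakCont}$(1,1)$ of $B_G$ has at most $\max(1,k-1)$ edges.

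\emph{Completeness.} Suppose $G$ has a biclique with $k$ edges. I would exhibit a \textsc{WeakCont}$(1,1)$ of $B_G$ of size at least $k$ (up to lower-order terms) by contracting the image of that biclique inside $B_G$ --- dragging along exactly the incident matching edges needed so that the additive tolerance is not violated --- and then verifying $d_C(x,y)\ge d(x,y)-1$ for every pair $x,y$ that is not merged. The only pairs that need attention are those whose shortest path in $B_G$ meets the contracted vertex set; for these one argues according to where the path enters and leaves the contracted blob, with Lemma~\ref{lem:path} supplying the ``all or nothing'' behaviour of contracted edges along a path that makes the bookkeeping close. Combined with soundness this yields: if $G$ has a biclique with $A$ edges then $B_G$ has a \textsc{WeakCont}$(1,1)$ with at least $A$ edges (up to lower-order terms), while if $G$ has no biclique with $s$ edges then every \textsc{WeakCont}$(1,1)$ of $B_G$ has at most $\max(1,s-1)$ edges.

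\emph{Finishing and the main obstacle.} Fix $\epsilon'>0$ and apply Theorem~\ref{thm:pasin} with $\epsilon$ small enough to get, under \textsc{SSEH} (via \cite{manurangsi:LIPIcs:2017:7500}) or under \textsc{Strong UGC} (via \cite{bhangale_et_al:LIPIcs:2016:6272}), bipartite \textsc{MEB} instances on $n$ vertices that are \textsc{NP}-hard to approximate within $n^{1-\epsilon}$ unless \textsc{NP} $\subseteq$ \textsc{BPP}; pushing these through the reduction and using $N=\Theta(n)$ produces bipartite unit-length \textsc{WeakCont}$(1,1)$ instances on $N$ vertices that are \textsc{NP}-hard to approximate within $N^{1-\epsilon'}$ under the same hypotheses, as claimed. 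The step I expect to be the crux is completeness: the slack is exactly $1$, so the contracted edge set must be chosen precisely, and one must check that no unmerged pair --- in particular two of the degree-one vertices of $V_b\cup U_b$ hanging off the contracted biclique --- has its distance shortened by $2$; getting this right is exactly where the structure of $B_G$ and Lemma~\ref{lem:path} have to be used. The remaining ingredients --- soundness, bipartiteness, the $n^{1-\epsilon}\to N^{1-\epsilon'}$ rescaling, and the complexity-assumption plumbing --- are routine.
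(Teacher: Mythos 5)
Your soundness argument coincides with the paper's: Lemma~\ref{lem:biclique}, plus the observation that a contracted biclique with more than one edge cannot touch the degree-one vertices of $V_b\cup U_b$ and hence projects to a biclique of $G$. The genuine gap is exactly where you suspected it would be: completeness. You write that you ``would exhibit'' a \textsc{WeakCont}$(1,1)$ by contracting the image of the biclique together with whatever incident matching edges are needed, ``and then verify'' the additive condition --- but the verification is never carried out, and it is not routine bookkeeping, because the natural candidates fail. If $C$ consists only of the edges of a biclique on $A_a\cup B_a$ with $|A|,|B|\ge 2$, take $v,v'\in A$: their pendants satisfy $d(v_b,v'_b)=4$ (via a common neighbour in $B_a$), but after the biclique collapses to a single vertex $d_C(v_b,v'_b)=2$, violating $d_C\ge d-1$. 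If instead you absorb the incident matching edges into $C$ so that $v_b,v'_b$ are merged and that pair becomes vacuous, then any vertex $w$ outside the biclique that is adjacent in $G$ to some $u\in B$ (such a $w$ exists by connectivity) gives $d(w_a,v_b)=3$ while $d_C(w_a,\cdot)=1$ to the merged blob, again a drop of $2$; moreover the first part of the proof of Lemma~\ref{lem:biclique} asserts that a weak contraction of size greater than one contains no matching edge, so this repair is incompatible with the soundness machinery you invoke. The completeness half therefore needs a new idea (or a modified gadget), not just careful checking.

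For comparison, the paper's own proof of Theorem~\ref{thm:hard} disposes of completeness in one sentence (``since $V_a\cup U_a$ induce $G$ in $B_G$, we have a biclique of the same size in $B_G$''), never verifying that contracting that biclique is a feasible \textsc{WeakCont}$(1,1)$; the counterexample pairs above apply to it verbatim. So you correctly identified the crux that the paper glosses over, but identifying it is not the same as closing it: as written, neither your argument nor the paper's establishes the completeness direction of the reduction, and without it the $n^{1-\epsilon}$ gap does not transfer from \textsc{MEB}.
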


\begin{proof}
For every $\frac{1}{4} > \delta > 0$
\begin{itemize}
    \item (Completeness) If we have a biclique with $((\frac{1}{2}-\delta)n)^2$ edges in $G$, since $V_a \cup U_a$ induce $G$ in $B_G$, we have a biclique of the same size in $B_G$.
    
    \item (Soundness) If we have no biclique of size $(\delta n)^2$ in $G$, there is no biclique of that size in $B_G$. By Lemma \ref{lem:biclique}, the largest $(1,1)$ Weak Contraction contracts less than $(\delta n)^2$ edges.
\end{itemize}
\end{proof}

\begin{corollary}
Assuming SSEH \textsc{OR} assuming \textsc{Strong UGC}, no polynomial algorithm can approximate \textsc{WeakCont}$(1,\beta)$ within a factor of $n^{1-\epsilon}$ for every $\epsilon > 0$, even for bipartite graphs with unit edge lengths, unless \textsc{NP} $\subseteq$ \textsc{BPP}.
\end{corollary}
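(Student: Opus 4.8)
The plan is to rerun the reduction behind Theorem~\ref{thm:hard} with a gadget hardened against the additive slack. We may take $\beta\in\mathbb N$, $\beta\ge 1$: the case $\beta=1$ is Theorem~\ref{thm:hard}, and on unit-length graphs $\textsc{WeakCont}(1,\beta)$ coincides with $\textsc{WeakCont}(1,\lfloor\beta\rfloor)$. Given a bipartite \textsc{MEB} instance $G=(V\cup U,E)$ I would build $B_G^{\beta}$ from $B_G$ by \emph{lengthening}: replace each matching edge $(v_a,v_b)$ and $(u_a,u_b)$ by a path of length $\beta$, and (for $\beta\ge 3$) replace each edge of the embedded copy of $G$ by a path of a suitable constant length $p=p(\beta)$. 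For $\beta=2$ no lengthening is needed: a contracted blob is feasible exactly when its $B_G$-diameter is $\le 2$, and in a bipartite graph a set $A\cup B$ of diameter $\le 2$ has every $A$--$B$ pair adjacent, i.e.\ the blob is a biclique, so the unmodified $B_G$ already works. In all cases $B_G^{\beta}$ is bipartite, unit-length, with $O_\beta(|V(G)|)$ vertices, so the asymptotics of the reduction are unaffected.

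Next I would pin down the two quantitative facts the reduction needs. \emph{(Generalized path lemma.)} If a path $P$ in $B_G^{\beta}$ contains $\beta+1$ pairwise vertex-disjoint edges of $C$, then $P\subseteq C$; the proof of Lemma~\ref{lem:path} carries over, since an uncontracted $(s,t)\in P\setminus C$ yields a pair of vertices whose geodesic runs along all of $P$, hence through $\beta+1$ contracted edges and through $(s,t)$, so its distance drops by at least $\beta+1$ while staying nonzero, contradicting $\textsc{WeakCont}(1,\beta)$. \emph{(Perturbation bound.)} Contracting (the $p$-subdivision of) a biclique of $G$ changes every pairwise distance in $B_G^{\beta}$ by at most the diameter of that blob, namely $\le 2p$ ($\le 2$ when $p=1$): a geodesic through the contracted blob can be ``un-contracted'' by reconnecting inside the blob, and two vertices of a (subdivided) biclique lie within $2p$. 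Choosing $p$ with $2p\le\beta$, completeness is then immediate: a biclique of $G$ with $k$ edges gives a $\textsc{WeakCont}(1,\beta)$ of $B_G^{\beta}$ with $\ge p\cdot k$ edges.

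The heart of the argument, and the step I expect to be the main obstacle, is the analogue of Lemma~\ref{lem:biclique}: \emph{every $\textsc{WeakCont}(1,\beta)$ of $B_G^{\beta}$ of size exceeding $O_\beta(|V(G)|)$ contracts, inside the embedded copy of $G$, precisely the $p$-subdivision of the edge set of a biclique of $G$, plus at most $O_\beta(|V(G)|)$ pendant-path edges}. Its skeleton mirrors Lemma~\ref{lem:biclique}: by the generalized path lemma, fully contracting a matching path together with any further edge triggers a runaway chain of contractions ending in a contradiction, so up to $O_\beta(|V(G)|)$ pendant edges $C$ is supported on the copy of $G$; and each contracted blob there carries a ``witness'' pair — two endpoints of the pendant paths hanging off two original $G$-vertices of the blob — whose distance may not shrink by more than $\beta$, which pins every pair of the blob's original $G$-vertices to $B_G^{\beta}$-distance $\le\beta$, hence (with $p$ also chosen so that $3p>\beta$) to $G$-distance $\le 2$, hence opposite-side vertices of a blob to adjacency, so the blob's original vertices span a biclique of $G$; vertex-disjointness of the blobs then bounds $\sum_i|S_i||T_i|$. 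The genuinely delicate point — and the reason the unmodified $B_G$ fails the moment $\beta\ge 3$, so that this is not a routine corollary — is that the slack $\beta$ makes \emph{any} small-diameter subgraph contractible ``for free'' (a small piece of every edge of a dense $G$, or an entire dense diameter-$3$ subgraph once $\beta\ge 3$); the purpose of the lengthening is exactly to force every such locally-feasible contraction to be either of negligible size or a genuine subdivided biclique, and making the two requirements $2p\le\beta$ and $3p>\beta$ coexist while also controlling partial contractions of the $p$-paths is where the real work sits.

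Granting the biclique lemma, the rest is bookkeeping as in Theorem~\ref{thm:hard}. Feeding the tensor-product \textsc{MEB} gap instance constructed above: completeness yields a biclique with $((\tfrac12-\delta)n)^2$ edges, hence a $\textsc{WeakCont}$ of size $\ge p((\tfrac12-\delta)n)^2$; soundness (no biclique with $(\delta n)^2$ edges, and the blobs disjoint, so $\sum_i|S_i||T_i|\le\delta n^2$) bounds every $\textsc{WeakCont}$ by $p\,\delta n^2+O_\beta(n)$. This constant-factor gap, amplified by the randomized-graph-product argument of \cite{manurangsi:LIPIcs:2017:7500} under \textsc{SSEH} (resp.\ \cite{bhangale_et_al:LIPIcs:2016:6272} under \textsc{Strong UGC}), which is also where $\mathrm{NP}\not\subseteq\mathrm{BPP}$ enters, becomes an $N^{1-\epsilon}$ gap on instances of size $N=O_\beta(|V(G)|)$; since the gadget is linear-size and the gap polynomial, the amplification passes through with the exponent essentially unchanged, giving the claimed $n^{1-\epsilon}$ inapproximability of $\textsc{WeakCont}(1,\beta)$ on bipartite unit-length graphs.
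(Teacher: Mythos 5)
Your proposal heads in a completely different direction from the paper, and in doing so it both takes on a much harder task and leaves that task unfinished. The paper's entire proof of this corollary is a one-line scaling argument: keep the gadget $B_G$ exactly as in Theorem~\ref{thm:hard} but assign every edge the weight $\beta$ instead of $1$. Then every distance is $\beta$ times its value in the unit-weight version, so a set $C$ satisfies $d_C(u,v)\ge d(u,v)-\beta$ in the reweighted graph if and only if it satisfies $d_C(u,v)\ge d(u,v)-1$ in the unit-weight graph, and likewise $d_C(u,v)\neq 0$ in one iff in the other; Lemmas~\ref{lem:path} and~\ref{lem:biclique} and the completeness/soundness computation of Theorem~\ref{thm:hard} therefore transfer verbatim. (This does mean the corollary as literally stated --- ``unit edge lengths'' --- is not what the paper's proof delivers; it delivers hardness for edge lengths all equal to $\beta$, which is the same graph up to rescaling. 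Your instinct that keeping genuinely unit edge lengths requires more work once $\beta\ge 2$ is correct, but that extra work is not what the paper does, and it is not needed for the scaled statement.)

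As a standalone alternative, your subdivision construction is not yet a proof. The analogue of Lemma~\ref{lem:biclique} --- that any large $(1,\beta)$-weak contraction of $B_G^{\beta}$ is essentially a subdivided biclique plus a negligible remainder --- is the entire content of the corollary in your setting, and you explicitly leave it as ``where the real work sits''; nothing in the proposal establishes it. Moreover the two constraints you impose on the subdivision length $p$, namely $2p\le\beta$ (so that contracting a subdivided biclique remains feasible) and $3p>\beta$ (so that feasibility of a blob forces $G$-distance at most $2$, hence adjacency across sides), admit no integer solution when $\beta=3$, since you would need $1<p\le 3/2$; so even the skeleton of the construction breaks at that value. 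If you want the result for genuinely unit edge lengths you would need to repair both points; if you are content with the statement up to scaling, the one-line reweighting argument suffices and the entire apparatus of $B_G^{\beta}$ is unnecessary.
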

\begin{proof}
In our construction if we replace the edge weights with $\beta$, everything follows similarly.
\end{proof}

\section{Acknowledgements}
The author thanks Bundit Laekhanukit for his mentorship and useful discussion on the problem, especially for suggesting the inapproximability results for the Maximum Biclique problem. In addition, the author would like to thank him for funding through his 1000-talents award by the Chinese government.
This work was partially done while the author was visiting the Institute for Theoretical Computer Science at the Shanghai University of Finance and Economics (ITCS@SUFE), and the author would like to thank ITCS@SUFE for the office space and all the support.

\newcommand{\etalchar}[1]{$^{#1}$}

\end{document}